\def\Oh{\mathcal{O}}
\long\def\ignore#1{}
\def\eqref#1{(\ref{#1})}
\newcommand{\xx}{\$}
\newcommand{\A}{\Sigma}
\newcommand{\Asize}{\sigma}
\newcommand{\LCP}{\mathsf{LCP}}
\newcommand{\LCS}{\mathsf{LCS}}
\newcommand{\nz}{{n_0}}
\newcommand{\none}{{n_1}}
\newcommand{\oneb}{{\bf 1}}
\newcommand{\zerob}{{\bf 0}}
\newcommand{\Bid}{\mathsf{Block\_id}}
\newcommand{\bid}{\mathsf{id}}
\newcommand{\bv}[1]{Z^{(#1)}}
\newcommand{\kh}{{b(h)}}
\newcommand{\sbot}{0}
\def\gSACA{{\sf gSACA-K}\xspace}
\algnewcommand\KwTo{\textbf{to }} 
\algnewcommand\KwAnd{\textbf{and }}
\algnewcommand\KwWrite{\textbf{write }}
\algnewcommand\KwTofile{\textbf{to file }}
\renewcommand{\u}{\underline}
\newcommand{\etal}{{\it et al.}\xspace}
\renewcommand{\S}{\mathcal{C}}
\newcommand{\dbG}{de Bruijn\xspace}
\newcommand{\repr}[1]{\overrightarrow{#1}}
\newcommand{\reprr}[1]{\overleftarrow{#1}}
\newcommand{\last}{\mathsf{last}}
\newcommand{\Wa}{{W}}
\newcommand{\Wx}{{W}^{-}}
\newcommand{\onex}{\mathbf{1}}
\newcommand{\zerox}{\mathbf{0}}
\newcommand{\F}{\mathsf{C}}
\newcommand{\Node}{\mathsf{Node}}
\newcommand{\LF}{LF}
\newcommand{\sta}{\mathsf{start}}
\newcommand{\Colz}{\mathcal{C}_0}
\newcommand{\Colo}{\mathcal{C}_1}
\newcommand{\Colzo}{\mathcal{C}_{01}}
\newcommand{\lastz}{\mathsf{last}_0}
\newcommand{\Waz}{{W_0}}
\newcommand{\Gz}{{G_0}}
\newcommand{\Wxz}{W_0^{-}}
\newcommand{\Wxo}{W_1^{-}}
\newcommand{\lasto}{\mathsf{last}_1}
\newcommand{\Wao}{{W_1}}
\newcommand{\Go}{{G_1}}
\newcommand{\lastx}[1]{\mathsf{last}_{#1}}
\newcommand{\Wax}[1]{{W_{#1}}}
\newcommand{\Wxx}[1]{W_{#1}^{-}}
\newcommand{\laz}[1]{\ell_0(#1)}
\newcommand{\lao}[1]{\ell_1(#1)}
\newcommand{\Bxx}{B_2}
\newcommand{\zzx}{\mathit{0}}
\newcommand{\oddx}{\mathit{1}}
\newcommand{\evenx}{\mathit{2}}
\newcommand{\oox}{\mathit{3}}
\newcommand{\edges}{m}
\def\mysubsubsection#1{\medbreak\noindent\textbf{#1.}}
\def\b#1{\mbox{\boldmath $#1$}}
\begin{document}
\title{Space-Efficient Merging of Succinct de~Bruijn~Graphs}
%
%
\author{Lavinia Egidi \inst{1} \and 
Felipe A. Louza\inst{2} \and 
Giovanni Manzini\inst{1,3} 
}
\authorrunning{L. Egidi et al.}
%
\institute{University of Eastern Piedmont, {Alessandria, Italy} \\
\email{\{lavinia.egidi, giovanni.manzini\}@uniupo.it}\\
\and
Department of Computing and Mathematics, University of S\~ao Paulo, Brazil
\email{louza@usp.br}\\
\and
IIT CNR, Pisa Italy
}
\maketitle              
\begin{abstract}
We propose a new algorithm for merging succinct representations of {\em\dbG} graphs introduced in [Bowe {\em et al.} WABI 2012]. Our algorithm is based on the lightweight BWT merging approach by Holt and McMillan~[Bionformatics 2014, ACM-BCB 2014]. Our algorithm has the same asymptotic cost of the state of the art tool for the same problem presented by Muggli {\em et al.}~[bioRxiv 2017, Bioinformatics 2019], {but it uses less than half of its working space.} A novel {important} feature of our algorithm, not found in {any of} the existing tools, is that it can compute the {\em Variable Order} succinct representation of the union graph within the same asymptotic time/space bounds.


\keywords{de Bruijn graphs \and succinct data structures \and merging \and variable-order \and colored graphs \and external memory algorithms}

\end{abstract}
\section{Introduction}

The {\em\dbG\ }graph for a collection of strings is a key data structure in genome assembly~\cite{Pevzner2001}. After the seminal work of Bowe \etal~\cite{wabi/BoweOSS12}, many succinct representations of this data structure have been proposed in the literature~\cite{wabi/AlmodaresiPP17,Belazzougui_2018,dcc/BoucherBGPS15,bioinformatics/MuggliBNMBRGPB17} offering more and more functionalities still using a fraction of the space required to store the input collection uncompressed. 
In this paper we consider the problem of merging two existing succinct representations of {\dbG\ graphs built for} different collections. Since the \dbG\ graph is a lossy representation and from it we cannot recover the original input collection, the alternative to  merging is storing a copy of each collection to be used for building new \dbG\ graphs from scratch. 

Recently, Muggli {\em et al.}~\cite{Muggli229641,MuggliBC19} have proposed a merging algorithm for colored \dbG\ graphs and have shown the effectiveness of the merging approach for the construction of \dbG\ graphs for very large datasets. The algorithm in~\cite{MuggliBC19} is based on an MSD Radix Sort procedure of the graph edges and its running time is $\Oh(m k)$, where $m$ is the total number of edges and $k$ is the order of the \dbG\ graph. 

A fundamental parameter of any construction algorithm for succinct data structures is its {\em space usage} since this parameter determines the size of the largest dataset that can be handled by a machine with a  given amount of memory. For a graph with $m$ edges and $n$ nodes the 
merging algorithm by Muggli \etal uses, in addition to the input and the output, $2(m\log\sigma+m+n)$ bits plus $\Oh(\sigma)$ words of working space, where $\sigma$ is the alphabet size. This value represents a three fold improvement over previous results, but it is still larger than the size of the resulting \dbG graph which is upper bounded by $2(m\log\sigma+m) + o(m)$ bits.

In this paper, we present a new merging algorithm that still  runs in $\Oh(m k)$ time, but only uses $4n$ bits  plus $\Oh(\sigma)$ words of working space. For genome collections ($\sigma=5$) our algorithm uses less than half the space of Muggli \etal's: our advantage grows with the size of the alphabet and with the average outdegree $m/n$. Notice that the working space of our algorithm is always less than the space of the resulting \dbG graph. In Section~\ref{s:implementation} we will discuss the practical significance of this space reduction.

Our new merging algorithm is based on a mixed {LSD/MSD  Radix Sort} algorithm which is inspired by the lightweight BWT merging algorithm introduced by Holt and McMillan~\cite{bcb/HoltM14,bioinformatics/HoltM14} and later improved in~\cite{spire/EgidiM17,GapjTA}. In addition to its small working space, our algorithm has the remarkable feature that it can compute as a by-product, with no additional cost, the $\LCS$ (Longest Common Suffix)  between the node labels, thus making it possible to construct succinct Variable Order \dbG\ graph representations~\cite{dcc/BoucherBGPS15}, a feature not shared by any other merging algorithm. 

The rest of the paper is organized as follows. After reviewing succinct \dbG graphs in Section~\ref{s:notation}, we describe our algorithm in Section~\ref{s:alg}. In Section~\ref{s:implementation} we describe the implementation details and compare our result to the state of the art. In Section~\ref{s:variants} we discuss the case of colored or variable order \dbG graphs. In Section~\ref{s:external} we show that combining an external memory version of our merging algorithm with recent results on external memory \dbG\ graph construction~\cite{wabi/EgidiLMT18,almob/EgidiLMT19} we get a space efficient external memory procedure for building succinct representations of \dbG\ graphs for very large collections.

\section{Notation and background}\label{s:notation}

Given the alphabet $\A = \{ 1,2,\ldots,\Asize\}$ and a collection of strings $\S = s_1, \ldots, s_d$ over $\A$, we prepend to each string $s_i$ $k$ copies of a symbol $\$ \notin \A$ which is lexicographically smaller than any other symbol. The order-$k$ {\it \dbG graph} $G(V,E)$ for the collection $\S$ is a directed edge-labeled graph containing a node $v$ for every {\bf unique $\b{k}$-mer} appearing in one of the strings of $\S$.
For each node $v$ we denote by $\repr{v} = v[1,k]$ its associated $k$-mer, where $v[1]\dots v[k]$ are symbols. 
The graph $G$ contains an edge $(u,v)$, with label $v[k]$, iff one of the strings in $\S$ contains a {\bf $\b{(k+1)}$-mer} with
prefix $\repr{u}$ and suffix $\repr{v}$.
The edge $(u,v)$ therefore represents the $(k+1)$-mer $u[1,k] v[k]$.
Note that each node has at most $|\A|$ outgoing edges and all edges incoming to node $v$ have label $v[k]$.

\mysubsubsection{BOSS succinct representation}
In 2012, Bowe {\em et al.}~\cite{wabi/BoweOSS12} introduced a succinct representation for the {de Bruijn} graph, usually referred to as BOSS representation, for the authors’ initials. The authors showed how to represent the graph in small space supporting fast navigation operations. The BOSS representation of the graph $G(V,E)$ is defined by considering the set of nodes $v_1, v_2, \ldots v_n$ sorted according to the colexicographic order of {their} associated {$k$-mer}. 
Hence, if $\reprr{v}=v[k]\dots v[1]$ denotes the string $\repr{v}$ reversed, the nodes are ordered so that 
\begin{equation}\label{eq:order}
\reprr{v_1} \prec \reprr{v_2} \prec \cdots \prec  \reprr{v_n}
\end{equation}
By construction the first node is $\reprr{v_1}=\$^k$ and all $\reprr{v_i}$ are distinct. 
For each node $v_i$, $i=1,\ldots,n$, we define $W_i$ as the sorted sequence of symbols on the edges leaving from node $v_i$; if $v_i$ has out-degree zero we set $W_i = \$$. 
Let $\Node[i]$ denote the node label for $W_i$.
Finally, we define
\begin{enumerate}
  \item $\Wa[1,m]$ as the concatenation $W_1 W_2 \cdots W_n$;
  \item $\Wx[1,m]$ as the bitvector such that $\Wx[i]=\oneb$ iff $\Wa[i]$ corresponds to the label of the edge $(u,v)$ such that $\reprr{u}$ has the smallest rank among the nodes that have an edge going to node $v$;
  \item $\last[1,m]$ as the bitvector such that $\last[i]=1$ iff $i=m$ or the outgoing edges corresponding to $\Wa[i]$ and $\Wa[i+1]$ have different source nodes.
  \item $\F[1,\sigma]$ as the integer array, such that $\F[c]$ stores the number of symbols smaller than $c \in \Sigma \cup \{\$\}$ in the last symbol of $\Node$. 
\end{enumerate}

The length $m$ of the arrays $\Wa$, $\Wx$, and $\last$ is equal to the number of edges plus the number of nodes with out-degree 0. In addition, the number of $\onex$'s in $\last$ is equal to the number of nodes $n$, and the number of $\onex$'s in $\Wx$ is equal to the number of nodes with positive in-degree, which is $n-1$ since $v_1=\$^k$ is the only node with in-degree 0.
Array $\F$ can be obtained by scanning $\Wa$, $\Wx$ and $\last$, therefore, array $\Node[1,m]$ is not stored explicitly.

Note that there is a natural one-to-one correspondence, called  $\LF$ for historical reasons, between the indices $i$ such that $\Wx[i]=\onex$ and the the set $\{2, \ldots,n\}$: in this correspondence $\LF(i)=j$ iff $v_j$ is the destination node of the edge associated to $\Wa[i]$.  See example in Figs.~\ref{f:dbg} and~\ref{f:boss}.

\begin{figure}[t]
    \centering
    \includegraphics{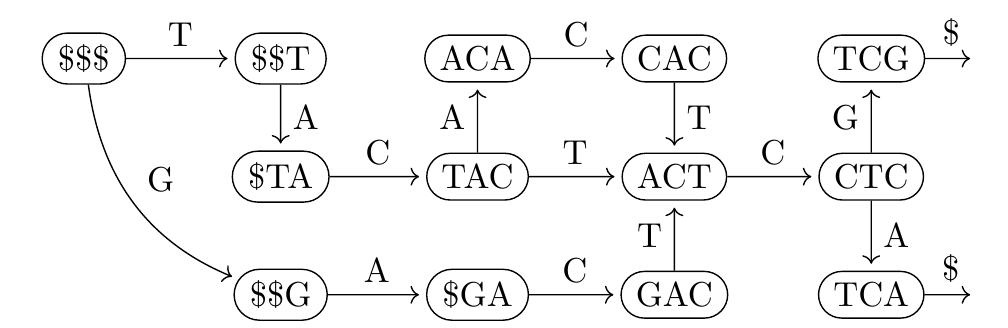}
    \caption{\dbG graph for $\S=\{$TACACT,
TACTCG, GACTCA$\}$.}
    \label{f:dbg}
\end{figure}

\begin{figure}[t]
    \centering
    \includegraphics[width=0.6\textwidth]{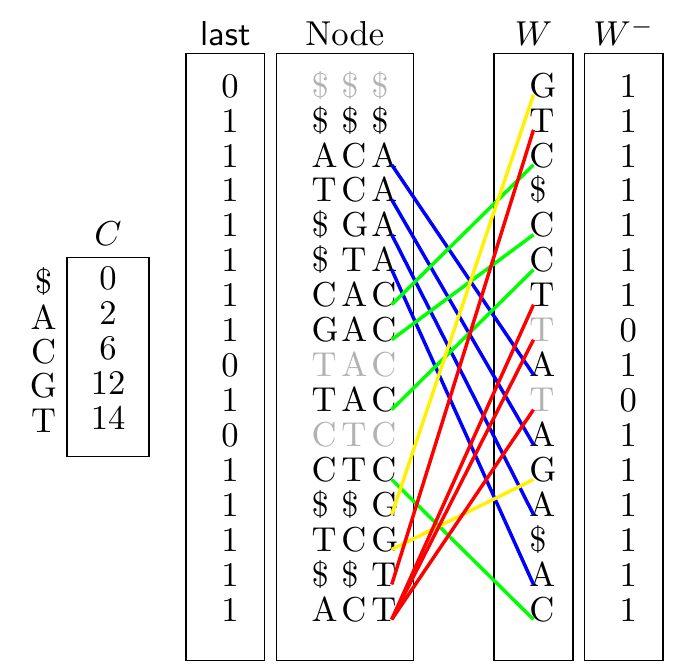}
    \caption{BOSS representation of the graph in Fig.~\ref{f:dbg}. The colored lines connect each label in $W$ to its destination node; edges of the same color have the same label. Note that edges of the same color do not cross because of Property~\ref{prop:lf}.}\label{f:boss}
\end{figure}

\begin{property}\label{prop:lf}
The $\LF$ map is order preserving in the following sense: if $\Wx[i]=\Wx[j]=\onex$ then
\begin{equation}\label{eq:wg}
\begin{array}{rcl}
\Wa[i] < \Wa[j]\; & \Longrightarrow\; &  \LF(i) < \LF(j),\\
(\Wa[i] = \Wa[j])\; \land (i < j)\; & \Longrightarrow\; & \LF(i) < \LF(j).
\end{array}
\end{equation}
\qed
\end{property}

In~\cite{wabi/BoweOSS12} it is shown that given array $\F$, enriching the arrays $\Wa$, $\Wx$, and $\last$ with the data structures from~\cite{FMMN04jou,RamanRS07} supporting constant time rank and select operations, we can efficiently navigate the graph $G$. 
The cost to store array $\F$ is $\Oh(\sigma \log n)$ bits.
The overall cost of encoding the three arrays and the auxiliary data structures is bounded by $m\log\sigma + 2m + o(m)$ bits, with the usual time/space tradeoffs available for rank/select data structures.

\mysubsubsection{Colored BOSS} 
The colored \dbG graph~\cite{iqbal2012a} is an extension of the \dbG graphs for a multiset of individual graphs, where each edge is associated with a set of ``colors'' that indicates which graphs contain that edge.

The BOSS representation for a set of graphs $\mathcal{G} = \{G_1, \dots, G_t\}$ contains the union of all individual graphs. In its simplest representation, 
the colors of all edges $W[i]$ are stored in a two-dimensional binary array $\mathcal{M}$, such that $\mathcal{M}[i,j]=1$ iff {the} $i$-th edge is present in graph $G_j$.
There are different compression alternatives for the color matrix $\mathcal{M}$ that support fast operations~\cite{wabi/AlmodaresiPP17,bioinformatics/MarcusLS14,bioinformatics/MuggliBNMBRGPB17}.
Recently, Alipanah {\em et~al.}~\cite{spire/AlipanahiKB18} presented a different approach to reduce the size of $\mathcal{M}$ by recoloring.

\mysubsubsection{Variable-order BOSS} 
The order $k$ (dimension) of a \dbG graph is an important parameter for genome assembling algorithms. 
The graph can be very small and uninformative when $k$ is small, whereas it can become too large or disconnected when $k$ is large. To add flexibility to the BOSS representation, Boucher {\em et al.}~\cite{dcc/BoucherBGPS15} suggest to enrich the BOSS representation of an order-$k$ \dbG graph with the length of the longest common suffix ($\LCS$) between the $k$-mers of consecutive nodes $v_1, v_2, \dots, v_n$ sorted according to~\eqref{eq:order}. These lengths are stored in a wavelet tree using $O(n \log k)$ additional bits. The authors show that this enriched representation supports navigation on all \dbG\ graphs of order $k'\leq k$ and that it is even possible to vary the order $k'$ of the graph on the fly during the navigation up to the maximum value $k$.

The $\LCS$ between $\repr{v_i}$ and $\repr{v_{i+1}}$ is equivalent to the length of the longest common prefix ($\LCP$) between their reverses $\reprr{v_i}$ and $\reprr{v_{i+1}}$. The $\LCP$ (or $\LCS$) between the nodes $v_1, v_2,  \cdots, v_n$ can be computed during the $k$-mer sorting phase. In the following we denote by VO-BOSS the {\bf variable order} succinct \dbG graph consisting of the BOSS representations enriched with the $\LCS/\LCP$ information.

\section{Merging plain BOSS representations}\label{s:alg}

Suppose we are given the BOSS {representations} of two \dbG graphs $\langle \Waz, \Wxz, \lastz \rangle$ and $\langle \Wao, \Wxo, \lasto \rangle$ obtained respectively from the collections of strings $\Colz$ and $\Colo$. In this section we show how to compute the BOSS representation for the union collection $\Colzo = \Colz \cup \Colo$. 
The procedure does not change in the general case when we are merging an arbitrary number of graphs. Let $\Gz$ and $\Go$ denote respectively the (uncompressed) \dbG graphs for $\Colz$ and $\Colo$, and let 
$$
v_1, \ldots, v_{\nz}\qquad\mbox{and}\qquad w_1, \ldots, w_{\none}
$$
denote their respective set of nodes sorted in colexicographic order. Hence, with the notation of the previous section we have
\begin{equation}\label{eq:sorted}
\reprr{v_1} \prec \cdots \prec \reprr{v_\nz} \qquad\mbox{and}\qquad 
\reprr{w_1} \prec \cdots \prec \reprr{w_\none}
\end{equation}
We observe that the $k$-mers in the collection $\Colzo$ are simply the union of the $k$-mers in $\Colz$ and $\Colo$. To build the \dbG graph for $\Colzo$ we need therefore to: 1) merge the nodes in $\Gz$ and $\Go$ according to the colexicographic order of their associated $k$-mers, 2) recognize when two nodes in $\Gz$ and $\Go$ refer to the same $k$-mer, and 3) properly merge and update the bitvectors $\Wxz$, $\lastz$ and $\Wxo$, $\lasto$.

\subsection{Phase 1: Merging $k$-mers} 

The main technical difficulty is that in the BOSS representation the $k$-mers associated to each node $\repr{v}=v[1,k]$ are not directly available. 
Our algorithm will reconstruct them using the symbols associated to the graph edges; to this end the algorithm will consider only the edges such that the corresponding {entries} in $\Wxz$ or $\Wxo$ {are} equal to $\onex$. 
Following these edges, first we recover the last symbol of each $k$-mer, following them a second time we recover the last two symbols of each $k$-mer and so on. However, to save space we do not explicitly maintain the $k$-mers; instead, using the ideas from~\cite{bcb/HoltM14,bioinformatics/HoltM14} our algorithm computes a bitvector $\bv{k}$ representing how the $k$-mers in $\Gz$ and $\Go$ should be merged according to the colexicographic order.  

To this end, our algorithm executes $k-1$ iterations of the code shown in Fig.~\ref{fig:xHMalgo} (note that lines~\ref{line:B=0h}--\ref{line:B=0hEnd} and~\ref{line:block_process_start}--\ref{line:block_process_end} of the algorithm are related to the computation of the $B$ array that is used in the following section). For $h=2,3,\ldots,k$, during iteration $h$, we compute the bitvector $\bv{h}[1,n_0+n_1]$ containing $n_0$ \zerob's and $n_1$ \oneb's such that $\bv{h}$ satisfies the following property

\begin{property}\label{prop:xhblock}
For $i=1,\ldots, \nz$ and $j=1,\ldots \none$ the $i$-th \zerob\ precedes the
$j$-th \oneb\ in $\bv{h}$ if and only if 
$\reprr{v_i}[1,h] \;\preceq\; \reprr{w_j}[1,h]$.
\qed
\end{property}

Property~\ref{prop:xhblock} states that if we merge the nodes from $\Gz$ and $\Go$ according to the bitvector $\bv{h}$ the corresponding $k$-mers will be sorted according to the lexicographic order restricted to the first $h$ symbols of each reversed $k$-mer. As a consequence, $\bv{k}$ will provide us the colexicographic order of all the nodes in $\Gz$ and $\Go$. 
To prove that Property~\ref{prop:xhblock} holds, we first define $\bv{1}$ and show that it satisfies the property, then we prove that for $h=2,\ldots,k$ the code in Fig.~\ref{fig:xHMalgo} computes $\bv{h}$ that still satisfies Property~\ref{prop:xhblock}. 

For $c\in\A$ let $\laz{c}$ and $\lao{c}$ denote respectively the number of nodes in $\Gz$ and $\Go$ whose associated $k$-mers end with symbol~$c$. These values can be computed with a single scan of $\Waz$ (resp. $\Wao$) considering only the symbols $\Waz[i]$ (resp. $\Wao[i]$) such that $\Wxz[i]=\onex$ (resp. $\Wxo[i]=\onex$). By construction, it is
$$
\nz = 1 + \sum_{c\in\A} \laz{c},\qquad\mbox\qquad \none = 1 + \sum_{c\in\A} \lao{c}
$$
where the two 1's account for the nodes $v_1$ and $w_1$ whose associated $k$-mer is $\xx^k$. 
We define 
\begin{equation}\label{eq:Z1}
\bv{1} = \u{\zerob\oneb}~ \u{\zerob^{\laz{1}}\oneb^{\lao{1}}}~ \u{\zerob^{\laz{2}}\oneb^{\lao{2}}} \cdots \u{\zerob^{\laz{\sigma}}\oneb^{\lao{\sigma}}}\;.
\end{equation}
The first pair \zerob\oneb\ in $\bv{1}$ accounts for $v_1$ and $w_1$; for each $c\in\A$ group $\zerob^{\laz{c}}\oneb^{\lao{c}}$ accounts for the nodes ending with symbol~$c$. Note that, apart from the first two symbols, $\bv{1}$ can be logically partitioned into $\sigma$ subarrays one for each alphabet symbol. For $c\in\A$ let 
$$\sta(c) = 3 + \sum_{i<c}(\laz{i} + \lao{i})$$
then the subarray corresponding to $c$ starts at position $\sta(c)$ and has size $\laz{c} + \lao{c}$. 
As a consequence of~\eqref{eq:sorted}, the $i$-th \zerob\ (resp. $j$-th \oneb) belongs to the subarray associated to symbol $c$ iff $\reprr{v_i}[1]=c$ (resp. $\reprr{w_j}[1]=c$). 

To see that $\bv{1}$ satisfies Property~\ref{prop:xhblock}, observe that the $i$-th \zerob\ precedes $j$-th \oneb\ iff the $i$-th \zerob\ belongs to a subarray corresponding to a symbol not larger than the symbol corresponding to the subarray containing the $j$-th \oneb; this implies $\reprr{v_i}[1,1] \preceq \reprr{w_j}[1,1]$. 

The bitvectors $\bv{h}$ computed by the algorithm in Fig.~\ref{fig:xHMalgo} can be logically divided into the same subarrays we defined for $\bv{1}$. In the algorithm we use an array $F[1,\sigma]$ to keep track of the next available position of each subarray. Because of how the array $F$ is initialized and updated, we see that every time we read a symbol $c$ at line~\ref{step:getc} the corresponding bit $b=\bv{h-1}[k]$, which gives us the graph containing~$c$, is written in the portion of $\bv{h}$ corresponding to $c$ (line~\ref{step:putc}). The only exception are the first two entries of $\bv{h}$ which are written at line~\ref{step:01} which corresponds to the nodes $v_1$ and $w_1$. We treat these nodes differently since they are the only ones with in-degree zero. For all other nodes, we implicitly use the one-to-one correspondence {\eqref{eq:wg}} between entries $W[i]$  with $\Wx[i]=\oneb$ and nodes $v_j$ with positive in-degree.

The following Lemma proves the correctness of the algorithm in Fig.~\ref{fig:xHMalgo}.

\begin{lemma}\label{lemma:xhblock}
For $h=2,\ldots,k$, the array $\bv{h}$ computed by the algorithm in Fig.~\ref{fig:xHMalgo} satisfies Property~\ref{prop:xhblock}.
\end{lemma}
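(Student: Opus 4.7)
The proof will proceed by induction on $h$. The base case $h=1$ is precisely the statement established in the paragraph immediately preceding the lemma, where the structure of $\bv{1}$ defined in~\eqref{eq:Z1} was shown to satisfy Property~\ref{prop:xhblock}. So I would focus on the inductive step: assuming $\bv{h-1}$ satisfies Property~\ref{prop:xhblock} for parameter $h-1$, show that the bitvector $\bv{h}$ produced by the algorithm in Fig.~\ref{fig:xHMalgo} satisfies it for parameter $h$.

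The crucial algebraic observation that drives the proof is the following: if $v$ is a node of either $\Gz$ or $\Go$ with positive in-degree, and $u$ denotes the unique predecessor of $v$ such that the edge $(u,v)$ corresponds to an entry $\Wa[i]$ with $\Wx[i]=\onex$, then the label on that edge equals $v[k]$, and $\repr{v}=u[2,k]\cdot v[k]$, so $\reprr{v}[1,h] = v[k]\cdot \reprr{u}[1,h-1]$. In particular, sorting nodes by $\reprr{v}[1,h]$ in lexicographic order is identical to sorting them lexicographically by the pair $(v[k],\,\reprr{u}[1,h-1])$. The two exceptional in-degree-zero nodes $v_1$ and $w_1$, both corresponding to $\xx^k$, are handled by line~\ref{step:01}, which writes the pair $\zerob\oneb$ at the beginning of $\bv{h}$; this is the correct placement because $\reprr{v_1}[1,h]=\reprr{w_1}[1,h]=\xx^h$ are the colexicographically smallest entries, and the convention of Property~\ref{prop:xhblock} (using $\preceq$) places the $\zerob$ before the $\oneb$ in case of ties.

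For every other node $v$ of either graph, the algorithm writes exactly one bit into $\bv{h}$: namely, when the scan of $\bv{h-1}$ reads the bit $b$ at the position corresponding to the canonical predecessor $u$, and processes the outgoing edge $\Wa[i]=c=v[k]$ with $\Wx[i]=\onex$, it writes $b$ at position $F[c]$ and advances $F[c]$ (line~\ref{step:putc}). The bit written is correct because $u$ and $v$ lie in the same graph. For placement, observe that (i) $\bv{h}$ is partitioned into subarrays indexed by the last symbol $c\in\A$, exactly as $\bv{1}$ was, and the subarray boundaries are respected by the initialization of $F$; and (ii) within the subarray for $c$, bits are appended strictly in the order in which their canonical predecessors are encountered during the left-to-right scan of $\bv{h-1}$. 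By the inductive hypothesis, that scan enumerates canonical predecessors in lexicographic order of $\reprr{u}[1,h-1]$, with ties between a $v$-predecessor and a $w$-predecessor broken by placing the $\zerob$ first. Combining this with the key identity $\reprr{v}[1,h]=c\cdot \reprr{u}[1,h-1]$ yields exactly the ordering required by Property~\ref{prop:xhblock}.

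The main obstacle I anticipate is the correct handling of ties and of the one-to-one correspondence between positions of $\bv{h-1}$ and nodes. One has to argue carefully that the scan of $\bv{h-1}$ visits nodes in a well-defined order, invoking Property~\ref{prop:lf} so that when two nodes share the same $(h-1)$-prefix of their reversed label, the relative order of their children in the subarray for $c$ is still consistent with Property~\ref{prop:xhblock} for $h$; this is where the order-preserving property~\eqref{eq:wg} of the $\LF$ map is essential. Once this point is nailed down, the verification of the two counts (the $\nz$ zeros and $\none$ ones partitioned correctly among subarrays) follows from the identities $\nz = 1+\sum_c \laz{c}$ and $\none = 1+\sum_c \lao{c}$, and the induction is complete.
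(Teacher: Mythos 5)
Your plan is correct and takes essentially the same route as the paper's own proof: induction on $h$ with $\bv{1}$ as the base case, the identity $\reprr{v}[1,h]=v[k]\cdot\reprr{u}[1,h-1]$ for the canonical predecessor $u$ (the incoming edge with $\Wx$-bit $\onex$), and the observation that each symbol-$c$ subarray of $\bv{h}$ is filled sequentially in the order the predecessors are met while scanning $\bv{h-1}$, so the inductive hypothesis transfers. The paper organizes this as an explicit two-direction (``if''/``only if'') case split on the first symbol and leaves the bit-to-node correspondence (your Property~\ref{prop:lf} point) implicit via the remark preceding the lemma, but these are presentational differences rather than a different argument.
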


\begin{proof}
To prove the ``if'' part of Property~\ref{prop:xhblock} let $1 \leq f < g \leq \nz+\none$ denote two indexes such that {$\bv{h}[f]$} is the $i$-th \zerob\ and {$\bv{h}[g]$} is the $j$-th \oneb\ in $\bv{h}$ for some $1 \leq i \leq \nz$ and $1 \leq j \leq \none$. We need to show that $\reprr{v_i}[1,h] \preceq \reprr{w_j}[1,h]$. 

Assume first $\reprr{v_i}[1]\neq \reprr{w_j}[1]$. The hypothesis $f<g$ implies $\reprr{v_i}[1]<\reprr{w_j}[1]$, since otherwise during iteration~$h$ the $j$-th \oneb\ would have been written in a subarray of $\bv{h}$ preceding the one where the $i$-th \zerob\ is written. Hence $\reprr{v_i}[1,h] \preceq \reprr{w_j}[1,h]$ as claimed. 
 
Assume now $\reprr{v_i}[1] = \reprr{w_j}[1] = c$. In this case during iteration $h$
the $i$-th \zerob\ and the $j$-th \oneb\ are both written to the subarray of $\bv{h}$ associated to symbol~$c$. Let $f'$, $g'$ denote respectively the value of the main loop variable~$p$ in the procedure of Fig.~\ref{fig:xHMalgo} when the entries {$\bv{h}[f]$} and {$\bv{h}[g]$} are written. Since each subarray in $\bv{h}$ is filled sequentially, the hypothesis $f<g$ implies $f'<g'$. By construction $\bv{h-1}[{f}']=\zerox$ and $\bv{h-1}[{g}']=\onex$. Say ${f}'$ is the $i'$-th \zerob\ in $\bv{h-1}$ and ${g}'$ is the $j'$-th \oneb\ in $\bv{h-1}$. By the inductive hypothesis on $\bv{h-1}$ it is
\begin{equation}\label{eq:xhblock2}
\reprr{v_{i'}}[1,h-1] \;\preceq\; \reprr{w_{j'}}[1,h-1].
\end{equation}
By construction there is an edge labeled $c$ from $v_{i'}$ to $v_i$ and from $w_{j'}$ to $w_j$ hence 
$$
\repr{v_{i}}[1,h] = \repr{v_{i'}}[1,h-1]c,\qquad \repr{w_{j}}[1,h] = \repr{w_{j'}}[1,h-1]c;
$$
therefore
$$
\reprr{v_{i}}[1,h] = c \reprr{v_{i'}}[1,h-1],\qquad \reprr{w_{j}}[1,h] = c \reprr{w_{j'}}[1,h-1];
$$
using~\eqref{eq:xhblock2} we conclude that $\reprr{v_i}[1,h] \preceq \reprr{w_j}[1,h]$ as claimed.

For the ``only if'' part of Property~\ref{prop:xhblock}, assume $\reprr{v_i}[1,h] \preceq \reprr{w_j}[1,h]$ for some $i\geq 1$ and $j\geq 1$. We need to prove that in $\bv{h}$ the $i$-th \zerob\ precedes the $j$-th \oneb. If $\reprr{v_i}[1]\neq\reprr{w_j}[1]$ the proof is immediate. If $c=\reprr{v_i}[1]=\reprr{w_j}[1]$ then
$$
\reprr{v_i}[2,h]\preceq\reprr{w_j}[2,h].
$$
Let $i'$ and $j'$ be such that $\reprr{v_{i'}}[1,h-1] = \reprr{v_i}[2,h]$ and $\reprr{w_{j'}}[1,h-1] =\reprr{w_j}[2,h]$.
By induction {hypothesis}, in $\bv{h-1}$ the $i'$-th \zerob\ precedes the $j'$-th \oneb. 

During phase~$h$, the $i$-th \zerob\ in $\bv{h}$ is written to position $f$ when processing the $i'$-th \zerob\ of $\bv{h-1}$, and the $j$-th \oneb\ in $\bv{h}$ is written to position $g$ when processing the $j'$-th \oneb\ of $\bv{h-1}$. Since in $\bv{h-1}$ the $i'$-th \zerob\ precedes the $j'$-th \oneb\ and since $f$ and $g$ both belong to the subarray of $\bv{h}$ corresponding to the symbol $c$, their relative order does not change and the $i$-th \zerob\ precedes the $j$-th \oneb\ as claimed.\qed
\end{proof}

\begin{figure}[t]
\hrule\smallbreak
\begin{algorithmic}[1]
\For{$c \gets 1$ \KwTo $\sigma$} \label{line:initloop}
  \State $F[c] \gets \sta(c)$    \Comment{Init $F$ array}
  \State $\Bid[c] \gets -1$      \Comment{Init $\Bid$ array}
\EndFor
\State $i_0 \gets i_1 \gets 1$  \Comment{Init counters for $\Waz$ and $\Wao$}
\State $\bv{h} \gets \zerob\oneb$  \Comment{First two entries correspond to {$v_1$ and $w_1$}}\label{step:01}
\For{$p \gets 1$ \KwTo $n_0+n_1$}\label{line:initmainloop}
    \If{$B[p]\neq \sbot $ \KwAnd $B[p]\neq h$}\label{line:B=0h}
       \State $\bid\gets p$\Comment{A new block of $\bv{h-1}$ is starting}\label{line:blockstart}
    \EndIf  \label{line:B=0hEnd}
    \State $b \gets \bv{h-1}[p]$\Comment{Get bit $b$ from $\bv{h-1}$}
    \Repeat{}\Comment{Current node is from graph $G_b$}
      \If {$\Wxx{b}[i_b]=\oneb$}\label{line:W=1}
        \State $c \gets \Wax{b}[i_b]$ \Comment{Get symbol from outgoing edges}\label{step:getc}
        \State $q \gets F[c]{\mathsf ++}$ \Comment{Get destination for $b$ according to symbol $c$}
        \State $\bv{h}[q] \gets b$        \Comment{Copy bit $b$ to $\bv{h}$}\label{line:updatebv}\label{step:putc}
         \If{$\Bid[c]\neq \bid$}\label{line:block_process_start}
          \State $\Bid[c]\gets \bid$\Comment{Update block id for symbol $c$}
          \If{$B[q] = \sbot$} \Comment{Check if already marked}\label{line:new_start}
            \State$B[q] \gets h$\Comment{A new block of $\bv{h}$ will start here}\label{line:writeh}
          \EndIf
         \EndIf \label{line:block_process_end}        
      \EndIf 
    \Until{$\lastx{b}[i_b{\mathsf ++}] \neq \onex$} \Comment{Exit if $c$ was last edge}
\EndFor\label{line:endmainloop}
\end{algorithmic}
\smallbreak\hrule
\caption{Main procedure for merging succinct \dbG graphs. Lines~\ref{line:B=0h}--\ref{line:B=0hEnd} and~\ref{line:block_process_start}--\ref{line:block_process_end} are related to the computation of the $B$ array introduced in Section~\ref{s:phase2}.}\label{fig:xHMalgo}
\end{figure}

\subsection{Phase 2: Recognizing identical $k$-mers} \label{s:phase2}
Once we have determined, via the bitvector $\bv{h}[1, n_0+n_1]$, the colexicographic order of the $k$-mers, we need to determine when two $k$-mers are identical since in this case we have to merge their outgoing and incoming edges. Note that two identical $k$-mers will be consecutive in the colexicographic order and they will necessarily belong one to $\Gz$ and the other to $\Go$.

Following Property~\ref{prop:xhblock}, and a technique introduced in~\cite{spire/EgidiM17}, we identify the $i$-th \zerob\ in
$\bv{h}$ with $\reprr{v_i}$ and the $j$-th \oneb\ in $\bv{h}$ with $\reprr{w_j}$.
Property~\ref{prop:xhblock} is equivalent to state that we can logically
partition $\bv{h}$ into $\kh+1$ $h$-blocks
\begin{equation}\label{eq:Zblocks}
\bv{h}[1,\ell_1],\; \bv{h}[\ell_1+1, \ell_2],\; \ldots,\;
\bv{h}[\ell_\kh+1,n_0+n_1]
\end{equation}
such that each block corresponds to a set of $k$-mers which are prefixed by the same length-$h$ substring. 
Note that during iterations {$h=2,3,\dots,k$} the $k$-mers within an $h$-block will be rearranged, and sorted according to longer and longer prefixes, but they will stay within the same block. 

In the algorithm of Fig.~\ref{fig:xHMalgo}, in addition to $\bv{h}$, we maintain an integer array $B[1,\nz+\none]$, such that at the end of iteration~$h$ it is $B[i]\neq 0$ if and only if a block of $\bv{h}$ starts at position~$i$. Initially, for $h=1$, since we have one block {per} symbol, we set
$$
B=\u{1 0}\, \u{1 0^{\laz{1}+\lao{1}-1}}\, \u{1 0^{\laz{2}+\lao{2}-1}} \cdots \u{10^{\laz{\sigma}+\lao{\sigma}-1}}.
$$
During iteration~$h$, new block boundaries are established as follows. At line~\ref{line:blockstart} we identify each existing block with its starting position. Then, at lines~\ref{line:block_process_start}--\ref{line:block_process_end}, if the entry $\bv{h}[q]$ has the form $c\alpha$, while $\bv{h}[q-1]$ has the form $c\beta$, with $\alpha$ and $\beta$ belonging to different blocks, then we know that $q$ is the starting position of an $h$-block. Note that we write $h$ to $B[q]$ only if no other value has been previously written there. This ensures that $B[q]$ is the smallest position in which the strings corresponding to $\bv{h}[q-1]$ and $\bv{h}[q]$ differ, or equivalently, $B[q]-1$ is the LCP between the strings corresponding to $\bv{h}[q-1]$ and $\bv{h}[q]$. The above observations are summarized in the following Lemma, which is a generalization to \dbG\ graphs of an analogous result for BWT merging established in Corollary~4 in~\cite{spire/EgidiM17}.

\begin{lemma}\label{lemma:lcp}
After iteration~$k$ of the merging algorithm for $q=2,\ldots, \nz+\none$ if $B[q]\neq 0$ then $B[q]-1$ is the LCP between the {reverse} $k$-mers corresponding to $\bv{k}[q-1]$ and $\bv{k}[q]$, while if $B[q]=0$ {their LCP is equal to $k$}, hence such $k$-mers are equal.\qed
\end{lemma}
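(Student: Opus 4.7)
My plan is to prove by induction on $h$ the following strengthened invariant: after iteration $h$ of the algorithm (for $h=1,2,\ldots,k$) and for every $q\in\{2,\ldots,\nz+\none\}$, either $B[q]=h'$ with $1\leq h'\leq h$ and the longest common prefix of the reverse $k$-mers associated to $\bv{h}[q-1]$ and $\bv{h}[q]$, truncated at length $h$, equals $h'-1$; or $B[q]=0$ and those two $h$-prefixes are identical. Setting $h=k$ recovers the lemma, since truncating a $k$-mer at length $k$ is vacuous. The base case $h=1$ is a direct inspection of $\bv{1}$ as in~\eqref{eq:Z1} together with the initial $B=\u{10}\,\u{10^{\laz{1}+\lao{1}-1}}\cdots\u{10^{\laz{\sigma}+\lao{\sigma}-1}}$: a position $q$ has $B[q]=1$ exactly at the start of a symbol subarray, where the first symbols of the two adjacent reverse $k$-mers differ; elsewhere $B[q]=0$ and those first symbols coincide.

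For the inductive step from $h-1$ to $h$ I would lean on two structural facts. \emph{Fact (i):} for every $h'\leq h$, the block partition of $\bv{h}$ refines that of $\bv{h'}$, and within each $h'$-block the $k$-mers are only permuted across later iterations, so the common $h'$-prefix of the block is preserved position by position. \emph{Fact (ii):} each symbol-$c$ subarray of $\bv{h}$ is filled left to right as the algorithm scans $\bv{h-1}$ and follows outgoing $c$-edges, so the $h$-prefix carried by $\bv{h}[q]$ is $c\,\alpha_q$, where $\alpha_q$ is the $(h-1)$-prefix sitting at the source position $p_q$ in $\bv{h-1}$. The verification then splits into two cases. \textbf{Case A:} if $B[q]=h'$ was set in some earlier iteration ($h'<h$), the algorithm does not modify $B[q]$, and Fact (i) applied with $h'$ preserves the agreement of the $(h'-1)$-prefixes at $q-1,q$ and the disagreement at position $h'$, so the truncated LCP remains $h'-1$. \textbf{Case B:} if $B[q]=0$ at the start of iteration $h$, induction says the $(h-1)$-prefixes at $q-1,q$ in $\bv{h-1}$ coincide, and Fact (i) keeps them coinciding in $\bv{h}$. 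By Fact (ii), the $h$-prefixes at $q-1,q$ in $\bv{h}$ agree iff $\alpha_{q-1}=\alpha_q$ iff $p_{q-1}$ and $p_q$ lie in the same $(h-1)$-block of $\bv{h-1}$. The test $\bid\neq\Bid[c]$ on line~\ref{line:block_process_start} performs exactly this comparison---$\Bid[c]$ stores the block start of the previous source that wrote symbol $c$ (namely $p_{q-1}$) and $\bid$ the block start of the current source $p_q$---so when it fails $B[q]$ stays $0$ and the $h$-prefixes coincide, while when it succeeds $B[q]$ receives $h$ and the LCP equals $h-1$.

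The step I expect to demand the most care is verifying that $\bid$ and $\Bid[c]$ really do track the $(h-1)$-block membership of the two relevant sources. Two points need explicit treatment. First, the guard $B[p]\neq h$ on line~\ref{line:B=0h} is essential: without it, a boundary newly created during iteration $h$ could spuriously update $\bid$ and lose the reference to the $\bv{h-1}$ block structure. Second, the $k$-mers occupying positions $q-1,q$ in $\bv{h}$ may differ from those at the same positions in $\bv{h-1}$ because of intra-block reshuffling, so the persistence of the $(h-1)$-prefix agreement cannot be argued by naive position tracking and must be invoked through Fact (i). Once these two points are settled, the two cases close uniformly and the invariant propagates from $h-1$ to $h$.
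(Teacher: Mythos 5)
Your proposal is correct and follows essentially the same route as the paper, which states Lemma~\ref{lemma:lcp} without a detailed proof, relying on the preceding block/prefix observations (that $k$-mers stay inside their $h$-blocks in later iterations and that $h$ is written to $B[q]$ only when $B[q]$ is still $0$) and on the analogy with Corollary~4 of~\cite{spire/EgidiM17}. Your induction on $h$ with the strengthened invariant, the block-refinement fact, and the role of the guard $B[p]\neq h$ in keeping $\bid$/$\Bid$ tied to the $\bv{h-1}$ block structure is exactly a careful formalization of that sketched argument.
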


The above lemma shows that using array $B$ we can establish when two $k$-mers are equal and consequently the associated graph nodes should be merged. 

\subsection{Phase 3: Building BOSS representation for the union graph}
We now show how to compute the succinct representation of the union graph $\Gz~\cup~\Go$, consisting of the arrays $\langle \Wax{01}$, $\Wxx{01}$, $\lastx{01}\rangle$, given the succinct representations of $\Gz$ and $\Go$ and the arrays $\bv{k}$ and $B$.

The arrays  $\Wax{01}$, $\Wxx{01}$, $\lastx{01}$ are initially empty and we fill them in a single {sequential} pass. 
For $q=1,\ldots,\nz+\none$ we consider the values $\bv{k}[q]$ and $B[q]$.
If $B[q]=0$ then the $k$-mer associated to $\bv{k}[q-1]$, say $\reprr{v_i}$ is identical to the $k$-mer associated to  $\bv{k}[q]$, say $\reprr{w_j}$. In this case we recover from $\Wax{0}$ and $\Wax{1}$ the labels of the edges outgoing from $v_i$ and $w_j$, we compute their union and write them to $\Wax{01}$ {(we assume the edges are in the lexicographic order)}, writing at the same time the 
representation of the out-degree of the new node to $\lastx{01}$. 
If instead $B[q]\neq 0$, then the $k$-mer associated to $\bv{k}[q-1]$ is unique and we copy the information of its outgoing edges {and out-degree} directly to $\Wax{01}$ and $\lastx{01}$.

When we write the symbol $\Wax{01}[i]$ we simultaneously write the bit $\Wxx{01}[i]$ according to the following strategy. 
If the symbol $c=\Wax{01}[i]$ is the first occurrence of $c$ after a value $B[q]$, with $0 < B[q] < k$, then we set $\Wxx{01}[i]=\oneb$, otherwise we set $\Wxx{01}[i]=\zerob$.
The rationale is that if no values $B[q]$ with $0 < B[q] < k$  occur between two nodes, then the associated (reversed) $k$-mers have a common LCP of length $k-1$ and therefore if they both have an outgoing edge labelled with $c$ they reach the same node and only the first one  should have $\Wxx{01}[i]=\oneb$.

\section{Implementation details and analysis}\label{s:implementation}

Let $n=\none+\nz$ denote the sum of number of nodes in $\Gz$ and $\Go$, and let $\edges=|\Waz|+|\Wao|$ denote the sum of the number of edges. The $k$-mer merging algorithm as described executes in $\Oh(\edges)$ time a first pass over the arrays $\Waz$, $\Wxz$, and $\Wao$, $\Wxo$ to compute the values $\laz{c} + \lao{c}$ for $c\in\A$ and initialize the arrays $F[1,\sigma]$, $\sta[1,\sigma]$, $\Bid[1,\sigma]$ and $\bv{1}[1,n]$ (Phase 1). 
Then, the algorithm executes $k-1$ iterations of the code in Fig.~\ref{fig:xHMalgo} each iteration taking $\Oh(\edges)$ time. Finally, still in $\Oh(\edges)$ time the algorithm computes the succinct representation of the union graph (Phases 2 and 3). The overall running time is therefore $\Oh(\edges\, k)$.

We now analyze the space usage of the algorithm.  In addition to the input and the output, our algorithm uses $2n$ bits for two instances of the $\bv{\cdot}$ array (for the current $\bv{h}$ and for the previous $\bv{h-1}$), plus $n\lceil\log k\rceil$ bits for the $B$ array. 
Note, however, that during iteration $h$ we only need to check whether $B[i]$ is equal to 0, $h$, or some value within 0 and $h$. Similarly, for the computation of $\Wxx{01}$ we only need to distinguish between the cases where $B[i]$ is equal to 0, $k$ or some value $0 < B[i]< k$.
Therefore, we can save space replacing $B[1,n]$ with an array $\Bxx[1,n]$ containing two bits per entry representing the four possible states $\{\zzx,\oddx,\evenx,\oox\}$. During iteration $h$, the values in $\Bxx$ are used instead of the ones in $B$ as follows: An entry $\Bxx[i]=\zzx$ corresponds to $B[i]=0$, an entry $\Bxx[i]=\oox$ corresponds to an entry $0 < B[i] < h-1$. In addition, if $h$ is even, an entry $\Bxx[i]=\evenx$ corresponds to $B[i]=h$ and an entry $\Bxx[i]=\oddx$ corresponds to $B[i]=h-1$; while if $h$ is odd the correspondence is $\evenx \rightarrow h-1$, $\oddx \rightarrow h$. The reason for this apparently involved scheme, first introduced in~\cite{wabi/EgidiLMT18}, is that during phase $h$, an entry in $\Bxx$ can be modified either before or after we have read it at Line~\ref{line:blockstart}. Using this technique, the working space of the algorithm, i.e., the space in addition to the input and the output, is $4n$ bits plus $3\sigma + \Oh(1)$ words of RAM for the arrays $\sta$, $F$, and $\Bid$.

\begin{theorem}\label{t:merge1}
The merging of two succinct representations of two order-$k$ \dbG\ graphs can be done in $\Oh(\edges\, k)$ time using $4n$ bits plus $\Oh(\sigma)$ words of working space.\qed
\end{theorem}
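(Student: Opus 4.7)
The plan is to verify the two claims in the theorem -- the $\Oh(\edges\,k)$ running time and the $4n + \Oh(\sigma)$ working-space bound -- by bookkeeping the cost of each of the three phases described in Section~\ref{s:alg}, using the correctness already established in Lemmas~\ref{lemma:xhblock} and~\ref{lemma:lcp}. Because the algorithm has already been spelled out in Fig.~\ref{fig:xHMalgo} and the space-saving encoding via $\Bxx$ has been described in the surrounding prose, the proof is essentially an accounting argument rather than a new construction.

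For the running-time bound, I would first observe that the initialization (computing $\laz{c}+\lao{c}$ for all $c\in\A$, setting up $F$, $\sta$, $\Bid$, and writing the explicit form~\eqref{eq:Z1} of $\bv{1}$) is a single pass over $\Waz, \Wxz, \lasto, \Wxo$ and thus costs $\Oh(\edges + \sigma + n)$ time. Next, I would argue that each of the $k-1$ iterations of the main loop in Fig.~\ref{fig:xHMalgo} performs exactly one pass over $\bv{h-1}$ and, via the inner \textbf{repeat} loop, reads each entry of $\Waz,\Wxz,\lastz,\Wao,\Wxo,\lasto$ exactly once, so each iteration costs $\Oh(\edges)$. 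Summing over $h=2,\ldots,k$ yields $\Oh(\edges\,k)$. Finally Phase~3 is a single sequential pass of length $\nz+\none \leq \edges$ in which for each index $q$ we look up a constant number of entries in $\Waz,\Wao$ and their companion bitvectors and produce the triple $\langle \Wax{01}, \Wxx{01}, \lastx{01}\rangle$ (assuming rank/select structures on the inputs that answer in $\Oh(1)$ time); thus Phase~3 costs $\Oh(\edges)$. The total is $\Oh(\edges\,k)$ as claimed.

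For the working-space bound, I would separately tally the persistent structures. Observe that to compute $\bv{h}$ from $\bv{h-1}$ the algorithm only ever reads $\bv{h-1}$ and writes $\bv{h}$, so we need to keep at most two instances of this bitvector alive at any time, contributing $2n$ bits. The $B$ array, as originally defined with entries in $\{0,1,\ldots,k\}$, would cost $n\lceil\log(k+1)\rceil$ bits, but the observation in the paragraph preceding the theorem shows that during iteration $h$ we only need to distinguish four states $\{\zzx,\oddx,\evenx,\oox\}$ with the parity-swapping interpretation; the two bits per entry encoding $\Bxx$ costs $2n$ bits and carries all the information Lemma~\ref{lemma:lcp} requires for Phase~3 (namely whether $B[q]=0$, $B[q]=k$, or $0<B[q]<k$). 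This gives $2n+2n=4n$ bits. The remaining auxiliary structures -- $F[1,\sigma]$, $\sta[1,\sigma]$, $\Bid[1,\sigma]$ and a constant number of integer counters $i_0,i_1,p,q,c,b,h$ -- each fit in $\Oh(\sigma)$ machine words.

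The one subtlety I would want to verify carefully is the soundness of the $\Bxx$ encoding under the ``update-before-read'' issue flagged in the text: during iteration $h$ an entry $\Bxx[q]$ may be written at line~\ref{line:writeh} either before or after the index $q$ is reached by the main loop variable $p$ at line~\ref{line:blockstart}. Here I would argue, following the scheme from \cite{wabi/EgidiLMT18}, that the parity convention ($\oddx \leftrightarrow h$ vs.\ $h-1$ depending on the parity of $h$) makes the four-state code unambiguous within any single iteration, and that at the beginning of iteration $h+1$ the value ``$h-1$'' that must be demoted into the $\oox$ bucket is exactly the one encoded by the state that represents $h-1$ under the previous parity, so the transition can be done implicitly without rewriting $\Bxx$. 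This is the only nontrivial step; everything else is routine accounting that combines with the correctness lemmas of Section~\ref{s:alg} to give the stated bound.
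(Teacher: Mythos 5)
Your proposal is correct and follows essentially the same accounting argument as the paper's own analysis: $\Oh(\edges)$ work for the initialization, each of the $k-1$ iterations of Fig.~\ref{fig:xHMalgo}, and Phase~3, plus $2n$ bits for the two $\bv{\cdot}$ bitvectors and $2n$ bits for the two-bit $\Bxx$ encoding with the parity trick from~\cite{wabi/EgidiLMT18}, and $\Oh(\sigma)$ words for $F$, $\sta$, $\Bid$. Your extra discussion of why the four-state encoding remains unambiguous under the update-before-read issue is a welcome elaboration of a point the paper delegates to the cited reference, but it does not change the approach.
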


We stated the above theorem in terms of working space, since the total space depends on how we store the input and output, and for such storage there are several possible alternatives. The usual assumption is that the input \dbG graphs, i.e. the arrays $\langle \Waz, \Wxz, \lastz \rangle$ and $\langle \Wao, \Wxo, \lasto \rangle$, are stored in RAM using overall $m\log \sigma + 2m$ bits. Since the three arrays representing the output \dbG graph are generated sequentially in one pass, they are usually written directly to disk without being stored in RAM, so they do not contribute to the total space usage. Also note that during each iteration of the algorithm in Fig.~\ref{fig:xHMalgo}, the input arrays are all accessed sequentially. Thus we could keep them on disk reducing the overall RAM usage to just $4n$ bits plus $\Oh(\sigma)$ words; the resulting algorithm would perform additional $\Oh( k(m\log \sigma + 2m)/D )$ I/Os where $D$ denotes the disk page size in bits.

\mysubsubsection{Comparison with the state of the art} The \dbG graph merging algorithm by Muggli {\em et al.}~\cite{MuggliBC19,Muggli229641} is similar to ours in that it has a {\em planning phase} consisting of the colexicographic sorting of the $(k+1)$-mers associated to the edges of $G_0$ and $G_1$. To this end, the algorithm uses a standard MSD radix sort. However only the most significant symbol of each $(k+1)$-mer is readily available in $\Waz$ and $\Wao$. Thus, during each iteration the algorithm computes also the next symbol of each $(k+1)$-mer that will be used as a sorting key in the next iteration. The overall space for such symbols is $2m\lceil \log \sigma\rceil$ bits, since for each edge we need the symbol for the current and next iteration. In addition, the algorithm {uses up to $2(n+m)$ bits}
to maintain the set of intervals consisting in edges whose associated reversed $(k+1)$-mer have a common prefix; these intervals correspond to the blocks we implicitly maintain in the array $\Bxx$ using only $2n$ bits.

Summing up, the algorithm by Muggli {\em et al.} runs in $\Oh(mk)$ time, and uses $2(m\lceil\log\sigma\rceil + m + n)$ bits plus $\Oh(\sigma)$ words of working space. Our algorithm has the same time complexity but uses less space: even for $\sigma=5$ as in bioinformatics applications, our algorithm uses less than half the space ($4n$ bits vs. $6.64 m+2n$ bits).
This space reduction significantly influences the size of the largest \dbG graph that can be built with a given amount of RAM. For example, in the setting in which the input graphs are stored on disk and all the RAM is used for the working space, our algorithm can build a \dbG graph whose size is twice the size of the largest \dbG graph that can be built with the algorithm of Muggli \etal.

We stress that the space reduction was obtained by substantially changing the sorting procedure. Although both algorithms are based on radix sorting they differ substantially in their execution. The algorithm by Muggli {\em et al.} follows the traditional MSD radix sort strategy; hence it establishes, for example, that $ACG \prec ACT$ when it compares the third `digits` and finds that $G < T$. In our algorithm we use a mixed LSD/MSD strategy: in the above example we also find that $ACG \prec ACT$ during the third iteration, but this is established without comparing directly $G$ and $T$, which are not explicitly available. Instead, during the second iteration the algorithm finds that $CG \prec CT$ and during the third iteration it uses this fact to infer that $ACG \prec ACT$: this is indeed a remarkable sorting trick first introduced in~\cite{bioinformatics/HoltM14} and adapted here to \dbG graphs.

\section{Merging colored and VO-BOSS representations}\label{s:variants}

Our algorithm can be easily generalized to merge colored and VO (variable-order) BOSS representations. Note that the algorithm by Muggli {\em et al.} can also merge colored BOSS representations, but in its original formulation, it cannot merge VO representations.

Given the colored BOSS representation of two \dbG graphs $\Gz$ and $\Go$, the corresponding color matrices $\mathcal{M}_0$ and $\mathcal{M}_1$ have size $m_0 \times c_0$ and $m_1 \times c_1$. 
We initially create a new color matrix $\mathcal{M}_{01}$ of size $(m_0+m_1) \times (c_0+c_1)$ with all entries empty.
During the merging {of the union graph (Phase 3)}, for $q=1,\ldots,n$, we write the colors of the edges associated to $\bv{h}[q]$ to the corresponding line in $\mathcal{M}_{01}$  {possibly merging the colors when we find nodes with identical $k$-mers}  {in $\Oh(c_{01})$ time, with $c_{01}=c_0+c_1$.} 
To make sure that color {\sf id}s from $\mathcal{M}_{0}$ are different from those in $\mathcal{M}_{1}$  in the new graph we add the constant $c_0$ (the number of distinct colors in $\Gz$) to any color id coming from the matrix $\mathcal{M}_1$.  

\begin{theorem}
The merging of two succinct representations of colored \dbG\ graphs takes $\Oh(\edges \, \max(k,c_{01}))$ time and $4n$ bits  plus $\Oh(\sigma)$ words of working space, where $c_{01} = c_0+c_1$. \qed
\end{theorem}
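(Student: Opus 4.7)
The plan is to inherit Phases~1 and~2 from Theorem~\ref{t:merge1} without modification and only augment Phase~3 with the color-merging logic, arguing separately that the color matrices do not contribute to the working space. Phases~1 and~2 operate solely on the BOSS arrays $\langle\Waz,\Wxz,\lastz\rangle$ and $\langle\Wao,\Wxo,\lasto\rangle$ to produce the bitvector $\bv{k}$ and the compressed block-boundary array $\Bxx$; neither $\mathcal{M}_0$ nor $\mathcal{M}_1$ is consulted. By Theorem~\ref{t:merge1} these phases take $\Oh(\edges\, k)$ time and use $4n$ bits plus $\Oh(\sigma)$ words of working space, which is already absorbed in the claimed bounds.

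For the augmented Phase~3 I would process $q=1,\ldots,\nz+\none$ exactly as in Section~\ref{s:alg}, maintaining counters $i_0$ and $i_1$ that advance through $\Waz$ and $\Wao$ in lockstep with $\bv{k}$. When $B[q]\neq 0$ the current node is unique to one of the two graphs, and the corresponding row of $\mathcal{M}_0$ (resp.\ $\mathcal{M}_1$) is emitted into $\mathcal{M}_{01}$, shifting every color id coming from $\mathcal{M}_1$ by $c_0$ to keep the id spaces disjoint. When $B[q]=0$ the two consecutive $k$-mers coincide and the corresponding rows from $\mathcal{M}_0$ and $\mathcal{M}_1$ are merged by bitwise OR (after the same $c_0$-shift) into a single row of $\mathcal{M}_{01}$. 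Each copy or merge touches at most $c_{01}=c_0+c_1$ entries, so the total color work across all $\Oh(\edges)$ emitted rows is $\Oh(\edges\, c_{01})$. Adding this to the $\Oh(\edges\, k)$ cost of Phases~1--2 yields the advertised $\Oh(\edges\,\max(k,c_{01}))$ time bound.

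For the working-space bound the key observation is that $\mathcal{M}_0$ and $\mathcal{M}_1$ are part of the input, and $\mathcal{M}_{01}$ is part of the output, so under the same accounting as Theorem~\ref{t:merge1} they do not count toward working space. Because Phase~3 visits the rows of $\mathcal{M}_0$ and $\mathcal{M}_1$ in the natural order of $i_0$ and $i_1$, both input matrices can be streamed from disk and $\mathcal{M}_{01}$ can be streamed out to disk; no extra RAM buffer beyond a single row of length $c_{01}$ is needed, and that row can reuse scratch space already charged to $\Oh(\sigma)$ words if $c_{01}=\Oh(\sigma)$, or otherwise is part of the I/O cost rather than the RAM footprint. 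Hence the $4n$ bits plus $\Oh(\sigma)$ words budget is preserved verbatim.

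The main obstacle is making the color merging genuinely sequential in both matrices: I rely on the fact that Phase~3 walks $\bv{k}$ in colexicographic order and that within each graph this coincides with the native row order of $\mathcal{M}_0$ and $\mathcal{M}_1$, so the counters $i_0,i_1$ double as row pointers and no random access into the color matrices is ever required. Given that, the two bounds compose cleanly and the theorem follows.
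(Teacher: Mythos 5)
Your proposal is correct and follows essentially the same route as the paper: Phases 1--2 are reused verbatim, and Phase 3 is augmented so that the color rows of the emitted edges are copied or merged into $\mathcal{M}_{01}$ at $\Oh(c_{01})$ cost each, with the ids from $\mathcal{M}_1$ shifted by $c_0$, which gives $\Oh(\edges\,\max(k,c_{01}))$ time while $\mathcal{M}_0$, $\mathcal{M}_1$, $\mathcal{M}_{01}$ are charged to input/output rather than to the $4n$ bits plus $\Oh(\sigma)$ words of working space. The only hedge in your argument, the length-$c_{01}$ row buffer, is in fact unnecessary: after the $c_0$-shift the two column blocks are disjoint, so a merged row is simply the concatenation of the two input rows (padded with zeros where an edge is absent) and can be streamed out directly, matching the paper's accounting.
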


We now show that we can compute the variable order VO-BOSS representation of the union of two \dbG graphs $G_0$ and $G_1$ given their {\em plain}, eg. non variable order, BOSS representations. 
For the VO-BOSS representation we need the $\LCS$ array for the nodes in the union graph $\langle \Wax{01}$, $\Wxx{01}$, $\lastx{01}\rangle$.
Notice that after merging the $k$-mers of $\Gz$ and $\Go$ with the algorithm in Fig.~\ref{fig:xHMalgo} (Phase~1) the values in $B[1,n]$ already provide the LCP information between the reverse labels of all consecutive nodes (Lemma~\ref{lemma:lcp}).
When building the union graph (Phase~3), for $q=1,\ldots,n$, the $\LCS$ between two consecutive nodes, say $v_i$ and $w_j$, is equal to the $\LCP$ of their reverses $\reprr{v_i}$ and $\reprr{w_j}$, which is given by $B[q]-1$ whenever $B[q]>0$ (if $B[q]=0$ then $\reprr{v_i}=\reprr{w_j}$ and nodes $v_i$ and $v_j$ should be merged).  Hence, our algorithm for computing the VO representation of the union graph consists exactly of the algorithm in Fig.~\ref{fig:xHMalgo} in which we store the array $B$ in $n\log k$ bits instead of using the 2-bit representation described in Section~\ref{s:implementation}. Hence the running time is still $\Oh(m k)$ and the  working space becomes the space for the bitvectors $\bv{h-1}$ and $\bv{h}$ (recall we define the working space as the space used in addition to the space for the input and the output). 

\begin{theorem}
Merging two succinct representations of variable order \dbG\ graphs takes $\Oh(\edges k)$ time and $2n$ bits plus $\Oh(\sigma)$ words of working space.\qed
\end{theorem}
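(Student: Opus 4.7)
The plan is to reuse essentially unchanged the three-phase merging algorithm developed in Section~\ref{s:alg}, observing that the $B$ array computed during Phase~1 already carries the LCP values between consecutive reverse $k$-mers of the union graph (Lemma~\ref{lemma:lcp}). Those LCP values are exactly the LCS information required to augment a plain BOSS representation into a VO-BOSS representation, so the whole exercise becomes one of treating $B$ as part of the \emph{output} rather than as working space.

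First, I would run the procedure of Fig.~\ref{fig:xHMalgo} for $h=2,\ldots,k$ with a single change: replace the two-bit compressed $\Bxx$ array of Section~\ref{s:implementation} by an integer array $B[1,n]$ storing $\lceil\log k\rceil$ bits per entry. The correctness argument of Lemma~\ref{lemma:lcp} is oblivious to the encoding of $B$, so after iteration $k$ we still have, for every $q\geq 2$, either $B[q]=0$ (meaning the reverse $k$-mers represented by $\bv{k}[q-1]$ and $\bv{k}[q]$ coincide) or $B[q]-1$ equal to their LCP, which is the LCS of the corresponding forward $k$-mers. Note that if the inputs are themselves VO-BOSS representations, their own LCS arrays are simply ignored by the merging procedure.

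Next, I would execute Phase~3 exactly as in Section~\ref{s:alg}, producing $\Wax{01}$, $\Wxx{01}$ and $\lastx{01}$ by a single sequential scan of $\bv{k}$ and $B$, and simultaneously emitting the VO-BOSS's $\LCS$ array: whenever $\bv{k}[q-1]$ and $\bv{k}[q]$ correspond to distinct nodes (i.e.\ $B[q]\neq 0$), output $B[q]-1$ as the LCS between the two consecutive nodes of the union graph. When $B[q]=0$ the two nodes are merged into a single node, so no new LCS entry is produced, and the LCS value between the merged node and the next surviving node is precisely the one already sitting in the next nonzero $B$ entry; no recomputation is required.

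For the complexity bound, Phase~1 performs $k-1$ scans each costing $\Oh(\edges)$, while Phases~2 and~3 add a further $\Oh(\edges)$ pass, so the total time is $\Oh(\edges\, k)$. For the space, besides the input and the output (which now comprises the VO-BOSS triple plus the $\LCS$ array obtained from $B$ in $n\lceil\log k\rceil$ bits), the algorithm only needs the two rolling bitvectors $\bv{h-1}$ and $\bv{h}$, contributing $2n$ bits, together with the $\sta$, $F$, $\Bid$ tables in $\Oh(\sigma)$ words. The one point I expect to require care is justifying that $B$ can legitimately be treated as part of the output rather than as working memory: since it is exactly the array the VO-BOSS representation demands and since it is written sequentially alongside $\Wax{01}$, $\Wxx{01}$, $\lastx{01}$, this accounting is sound, but the argument must be stated explicitly so that the $2n$-bit bound is not misread as hiding an $\Oh(n\log k)$-bit buffer.
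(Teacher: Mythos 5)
Your proposal is correct and follows essentially the same route as the paper: keep the full $B$ array in $n\lceil\log k\rceil$ bits (instead of the 2-bit $\Bxx$), note via Lemma~\ref{lemma:lcp} that it is exactly the $\LCS$ information required by the VO-BOSS output, and charge only the two bitvectors $\bv{h-1},\bv{h}$ ($2n$ bits) plus $\Oh(\sigma)$ words as working space. The accounting point you flag—that $B$ counts as output, not working space—is precisely the remark the paper makes when restating its definition of working space.
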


\section{External memory construction}\label{s:external}

In this section we show that using our merging algorithm we can design a complete external memory algorithm to construct succinct \dbG graphs.

We preliminary observe that at each iteration of the algorithm in Fig.~\ref{fig:xHMalgo} not only the arrays $\langle \Waz, \Wxz, \lastz \rangle$ and $\langle \Wao, \Wxo, \lasto \rangle$ but also $\bv{h-1}$ and $\Bxx$ are read sequentially from beginning to end. At the same time, the arrays $\bv{h}$ and $\Bxx$ are written sequentially but into $\sigma$ different partitions whose starting positions are the values in~$\sta[1,\sigma]$ which are the same for each iteration. Thus, if we split $\bv{\cdot}$ and $\Bxx$ into $\sigma$ different files, all accesses are sequential and our algorithm runs in external memory in $\Oh(mk)$ time, doing $\Oh(mk)$ sequential I/Os and using only $\Oh(\sigma)$ words of RAM.

Assume now we are given a string collection $\S = s_1, \ldots, s_d$ of total length $N$, the desired order $k$, and the amount of available RAM $M$. First, we split $\S$ into smaller subcollections $r_i = s_j,\dots,s_{j'}$, such that we can compute the BWT and LCP array of each subcollection in linear time in RAM using $M$ bytes, using {\em e.g.} the suffix sorting algorithm~\gSACA~\cite{tcs/LouzaGT17}. For each subcollection we then compute, and write to disk, the BOSS representation of its \dbG graph using the algorithm described in~\protect{\cite[Section 5.3]{wabi/EgidiLMT18}}.
Since these are linear algorithms the overall cost of this phase is $\Oh(N)$ time and $\Oh(N)$ sequential I/Os. 

Finally, we merge all \dbG graphs into a single BOSS representation of the union graph with the external memory variant just described. Since the number of subcollections is $\Oh(N/M)$, a total of $\log(N/M)$ merging rounds will suffice to get the BOSS representation of the union graph. 

\begin{theorem}
Given a strings collection $\S = s_1, \ldots, s_d$ of total length $N$, we can build the corresponding order-$k$ succinct \dbG graph in $\Oh(N\,k\log(N/M))$ time and $\Oh(N\,k\log(N/M))$ sequential I/Os using $\Oh(M)$ words of RAM.\qed
\end{theorem}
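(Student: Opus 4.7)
The plan is to assemble the result from three building blocks already in place: (i) the linear-time construction of a BOSS representation from a BWT/LCP pair via \gSACA-K and the procedure of~\cite[Section 5.3]{wabi/EgidiLMT18}; (ii) the external-memory behaviour of the merging algorithm of Fig.~\ref{fig:xHMalgo}, which the previous paragraph shows can be run with $\Oh(\sigma)$ RAM words and $\Oh(mk)$ sequential I/Os by splitting $\bv{\cdot}$ and $\Bxx$ into $\sigma$ files aligned with $\sta[1,\sigma]$; and (iii) a balanced binary merging schedule.

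First I would fix a block size $\Theta(M)$ and partition $\S$ into $t=\Oh(N/M)$ subcollections $r_1,\ldots,r_t$, each of total length $\Oh(M)$, so that for each $r_i$ the BWT and LCP array fit in RAM and can be computed in linear time by \gSACA-K~\cite{tcs/LouzaGT17}. Feeding these into the procedure of~\cite[Section 5.3]{wabi/EgidiLMT18} produces the succinct BOSS representation $\langle W_i,W_i^{-},\mathsf{last}_i\rangle$ of the order-$k$ \dbG{} graph of $r_i$ and writes it sequentially to disk. Summed over $i$, this phase costs $\Oh(N)$ time and $\Oh(N)$ sequential I/Os and uses $\Oh(M)$ words of RAM.

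Next I would organize the $t$ graphs into a balanced binary merge tree of depth $\lceil\log_2 t\rceil = \Oh(\log(N/M))$ and, at each internal node, invoke the external-memory variant of the algorithm of Fig.~\ref{fig:xHMalgo}, reading the two input representations sequentially from disk and writing the union representation sequentially back to disk. The key invariant to check is that the sum of the edge counts of the graphs on any single level of the tree is $\Oh(N)$: this holds because the $(k{+}1)$-mers of a union graph form a subset of the multiset union of the $(k{+}1)$-mers of its summands, which in turn is bounded by the total number of $(k{+}1)$-mers in $\S$, namely $\Oh(N)$. Hence the total work at each level is $\Oh(Nk)$ time and $\Oh(Nk)$ sequential I/Os, and summing over the $\Oh(\log(N/M))$ levels gives the claimed $\Oh(Nk\log(N/M))$ bounds. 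Throughout, the external-memory merger uses only $\Oh(\sigma)=\Oh(1)$ words of RAM, so the dominant RAM usage remains the $\Oh(M)$ words held during the initial construction phase.

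The main obstacle, and the step I would justify most carefully, is the per-level $\Oh(N)$ edge-count bound: one must argue that merging can only decrease (never increase) the number of edges, so that the invariant is preserved from level to level, and simultaneously that $\sigma$ stays constant across the recursion (it does, since the alphabet is fixed). Once this is nailed down, the time/I/O accounting is the routine geometric sum $\sum_{\ell=0}^{\log(N/M)}\Oh(Nk)=\Oh(Nk\log(N/M))$, and the RAM accounting follows because at any moment at most one merger or one initial construction is active.
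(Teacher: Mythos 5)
Your proposal follows essentially the same route as the paper: split $\S$ into $\Oh(N/M)$ subcollections, build each BOSS representation in RAM via \gSACA and the construction of~\cite[Section 5.3]{wabi/EgidiLMT18}, then perform $\Oh(\log(N/M))$ rounds of pairwise merging with the external-memory variant of Fig.~\ref{fig:xHMalgo}. Your explicit per-level $\Oh(N)$ edge-count argument is a point the paper leaves implicit, but it is correct and does not change the approach.
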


Note that our construction algorithm can be easily extended to generate the colored/variable order variants of the \dbG graph. For the colored variant it suffices to use \gSACA\ to generate also the document array~\cite{tcs/LouzaGT17} and then use the colored merging variant. For the variable order representation, it suffices to store the $\LCP/\LCS$ values during the very last merging phase, using the techniques described in~\protect{\cite[Section~3]{wabi/EgidiLMT18}} to handle them in external memory.

\section*{Acknowledgments}

\paragraph{Funding.}
L.E. and G.M. were partially supported by PRIN grant 2017WR7SHH. L.E. was partially supported by the University of Eastern Piedmont project  {\sl Behavioural Types for Dependability Analysis with Bayesian Networks}.
F.A.L. was supported by the grants $\#$2017/09105-0 and $\#$2018/21509-2 from the S\~ao Paulo Research Foundation (FAPESP).
G.M. was partially supported by INdAM-GNCS Project 2019 {\sl Innovative methods for the solution of medical and biological big data} and by the  LSBC\_19-21 Project from the University of Eastern Piedmont.

%
%

\end{document}